\newtheorem{theorem}{Theorem}[section]
\newtheorem{lemma}{Lemma}[section]
\newtheorem{corollary}{Corollary}[section]
\newcommand\QED{\ifhmode\allowbreak\else\nobreak\fi
\quad\nobreak$\Box$\medbreak}
\newcommand{\proofstart}{\par\noindent\bf Proof:\rm\enspace}
\newcommand{\proofend}{\QED\par}
\newenvironment{proof}{\proofstart}{\proofend}
\begin{document}

\title{Stochastic Analysis of a Churn-Tolerant Structured Peer-to-Peer Scheme}

\author{Tim Jacobs \thanks{Department of Computer Science,
Purdue University,
West Lafayette, IN, 47907, USA.} \and Gopal Pandurangan \thanks{Division of Mathematical
Sciences, Nanyang Technological University, Singapore 637371 and Department of Computer Science, Brown University, Providence, RI 02912, USA.  \hbox{E-mail}:~{\tt gopalpandurangan@gmail.com}. Supported in part by the following grants: Nanyang Technological University grant M58110000, Singapore Ministry of Education (MOE) Academic Research Fund (AcRF) Tier 2 grant MOE2010-T2-2-082,
US NSF grant CCF-1023166, and a grant from the US-Israel Binational Science
Foundation (BSF).}}

\date{}

\maketitle

\begin{abstract}
We present and analyze a simple and general scheme to build a   churn (fault)-tolerant structured Peer-to-Peer (P2P) network.
 Our  scheme   shows how to ``convert"  a static network  into a
     dynamic distributed hash table(DHT)-based P2P network such that all the good properties of the static network are guaranteed with high probability (w.h.p).
Applying our  scheme to  a cube-connected cycles  network, for example,  yields a $O(\log N)$
degree connected network, in which  {\em every} search  succeeds in $O(\log N)$ hops w.h.p., using $O(\log N)$ messages, where $N$ is the expected stable network size.  Our scheme has an constant storage overhead (the number of nodes responsible for servicing a data item) and an $O(\log N)$ overhead (messages and time) per insertion and  essentially no overhead
  for deletions. All these  bounds are essentially optimal. While DHT schemes
  with similar  guarantees are already known in the literature,
  this work is new in the following aspects: (1) It presents a rigorous mathematical analysis of the scheme
  under a general stochastic model of churn and shows the above guarantees;
  (2) The theoretical analysis is complemented by a simulation-based analysis
  that validates the asymptotic bounds even in moderately sized networks and also studies performance under changing stable network size;
  (3) The presented scheme seems especially suitable for maintaining dynamic structures
  under churn efficiently. In particular, we show that a  spanning tree of low diameter
  can  be efficiently maintained  in constant time  and logarithmic number of messages per insertion or deletion w.h.p. 
  
  \medskip
  
\noindent {\bf Keywords:} P2P Network, DHT Scheme, Churn,  
Dynamic Spanning Tree, Stochastic Analysis.

\end{abstract}

\section{Introduction}
\label{sec:intro}
 Peer-to-Peer (P2P) networks  are highly dynamic: peers enter and leave the network and connections may be added or deleted at any time and thus the topology changes very dynamically.  The independent arrival and departure by a large number of peers  creates a collective effort that is called  as {\em churn}.
Measurement studies of real-world P2P networks ~\cite{SGG02,SW03, SR06,falkner} show that the churn rate is quite high:
nearly 50\% of peers in real-world networks can be replaced within an hour. However,  despite a large churn rate, these studies   show that the
total number of peers in the network is relatively {\em stable}.  The  study by Stutzbach and Rejaie \cite{SR06}  also indicate that P2P networks exhibit a high degree of variance in terms of the session time (the
amount of time spent by a node in the network in a session). They show that the distribution of session times
appear to follow a Weibull or lognormal distribution.

P2P systems  must have efficient and reliable routing in the presence of a dynamically changing
network.  The P2P overlay must exhibit good topological properties (e.g.,
  connectivity, low diameter, low degree, etc.) even if the composition of
  the underlying physical network exhibits significant change.
 Because the system dynamics of these networks are also highly
  asymmetric with only a small number of peers persistent over significant
  time periods, providing churn-tolerance in the presence of mostly
  short-lived peers is essential \cite{SR06}.

 A Distributed Hash Table (DHT) scheme (e.g., \cite{RF+01, SM+01, MNR02, kademlia}) creates a fully decentralized index that maps data
items to peers  and allows a peer to search for a data item
very efficiently (typically logarithmically in the size of the
network) without flooding.   Such systems have been called  {\em
structured} P2P networks, unlike Gnutella, for example, which is
{\em unstructured}.  In unstructured networks, there is no relation
between the file identifier and the peer where it resides.  Structured networks are more difficult to implement than an unstructured networks, mainly due to the fact that it is not easy to maintain a DHT in a highly dynamic setting.
In addition to this, since data is stored typically in some arbitrary node, fault-tolerance to node deletion is essential.
These are some of the reasons why structured P2P networks, despite their efficient search mechanism, have been somewhat  less successful than unstructured networks when it comes to practical deployment. Hence, it is of both  theoretical and practical interest to develop simple and efficient DHT schemes that work provably well under high churn rates.

In this paper, we present and analyze a simple and general scheme to build a   churn-tolerant structured P2P network.
The basic idea behind our scheme is simple.  It is easy to design a static topology with desirable properties such as
connectivity, low degree, low diameter, and an efficient (and local) routing algorithm. Indeed such topologies, e.g., hypercube, butterfly, cube connected cycles, de Bruijn graphs, etc.,  have been studied
extensively in parallel and distributed computing literature.   Our scheme shows how to ``convert"  a static graph   into a fault-tolerant  DHT  network such that all the good properties of the static graph are guaranteed with high probability.
For example, by applying our scheme to  a cube-connected cycles (CCC)  graph  yields a    $O(\log N)$
degree P2P network that has a  $O(\log N)$
latency  (i.e., search time), using $O(\log N)$ messages, with constant storage overhead. Here $N$ is the expected stable network size (cf. Section \ref{sec:model}).  Our bounds are essentially optimal since in our model (cf. Section \ref{sec:model}) if we want all nodes to access all data items with high probability, then it is necessary that the degree be  $\Omega(\log N)$; otherwise there will be
 a non-negligible probability that there would be nodes disconnected from the system.
 In  a dynamic network,  there is the additional challenge of
quantifying the work done by the algorithm to {\em maintain} the
desired properties. An important advantage of the  above protocol
is that it takes $O(\log N)$ overhead (messages and time) per insertion and {\em no} overhead
  for deletions. This is optimal since,  Liben-Nowell et al.
\cite{LBK02} show that $\Omega(\log N)$ work is required to
maintain (even) connectivity in this stochastic model.

In a P2P network it is important to design distributed dynamic algorithms that maintain  fundamental communication primitives such as spanning trees, spanners etc. For example, maintaining a breadth-first search tree is useful for efficient broadcasting, aggregation, and routing. Designing efficient distributed dynamic algorithms is challenging and only few results are known, see e.g., the work of \cite{elkin} that gives a distributed dynamic algorithm for maintaining a spanner.   It is non-trivial
to efficiently maintain even some spanning tree  dynamically --- the trivial
method would be to recompute a spanning tree (e.g., by breadth-first search \cite{peleg}) every time the network changes.
However, this will take $\Theta(D)$ time and $\Theta(|E|)$ messages \cite{peleg}. In contrast, we show how a spanning tree of diameter $O(D)$ (where $D$
is the diameter of the underlying graph) can be maintained by our scheme
in $O(\log N)$  messages and $O(1)$ time per insertion or deletion.  It is not clear how one can efficiently maintain a breadth-first spanning tree or some low-diameter spanning tree in  previous schemes e.g., Chord \cite{BK+03}.

\subsection{Related Work and Comparison}  
The literature on DHT schemes is huge and we 
confine ourselves to those that are relevant to our work.

Our scheme is an improvement in the degree size and message complexity over the network of Saia et
al. \cite{SFG+02}. The structured P2P network described by Saia {\em et
al.}~\cite{SFG+02} has a $O(\log N)$ latency for search, using $O(\log^2 N)$ messages, and
$O(\log^3 N)$ degree. Their fault-tolerant overlay is a butterfly-based expander topology. Their work
guarantees that a large number of data items are available even if a large fraction of
{\em arbitrary} peers are deleted (hence their scheme can tolerate even adversarial deletions unlike ours), under the assumption that, at any time, the number
of peers deleted by an adversary must be smaller than the number of peers joining.
In contrast, our scheme constructs a  $O(\log N)$ latency and $O(\log N)$
degree P2P network that
guarantees that {\em every} search succeeds with high probability (whp)\footnote{Throughout, ``whp" means ``with probability at least $1 - N^{\Omega(1)}$".}
at any time, rather than just a large fraction, under a natural and general
stochastic model ---  the $M/G/\infty$ model \cite{Ross70}. In a $M/G/\infty$ model  the holding (session) times of nodes
can have an {\em arbitrary} distribution, while arrival of nodes is assumed to be Poisson.
(Real-world P2P network measurement studies \cite{SGG02, SR06} have shown that this
is a reasonable statistical model.)
The construction of our overlay is also much
simpler compared to \cite{FS02, SFG+02}. Our scheme also improves significantly on the Warp scheme \cite{Warp}.
Warp guarantees $O(\log N)$ search time, but has a degree of $O(\log^3 N)$.   Our scheme has low maintenance overhead. In particular, node deletions
does not incur any overhead. Multiple nodes can join and leave at the same time (in particular, up to a constant fraction of the total can leave and join at the same time) without any need to change the protocol, and hence our protocol can operate in a highly dynamic setting.

The idea of a general scheme for mapping a static network into a dynamic one 
has appeared before see e.g., \cite{naor, ittai, Manku}.
The work of \cite{cycloid} uses a CCC graph (this is also the graph used to illustrate our general scheme in this paper) to build a structured P2P network. However the above papers do not  present a rigorous analysis of the performance under a  realistic stochastic model. Furthermore, to the best of our knowledge, none of the previous works, address the problem of efficient maintenance of spanning substructures 
 under churn. 
 
There has been other  works  on building fault-tolerant   DHTs under  different deletion models --- adversarial deletions and stochastic deletions. For example, the works of
of \cite{kuhn-schmid, cuckoo, shell} deals with  adversarial churn and gives techniques
to handle worst-case joins and leaves.  Fiat and Saia \cite{FS02} proposed a DHT  network that is robust against adversarial deletions (i.e., an adversary can choose which nodes to fail). In this model some small fraction
of the non-failed nodes would be denied from accessing some of the data items. While this solution is more general than our model it has some disadvantages: (1) It is not clear whether the system can guarantee its bounds when nodes leave and join dynamically; (2) the message   complexity is large --- $O(\log^3 N)$ and so is the network degree. Moreover their construction is very complicated which can increase the likelihood of error in implementation and decrease the possibility of practical use. In a subsequent paper Saia et al. \cite{SFG+02} address the first problem and give a scheme with  $O(\log N)$ time for search, using $O(\log^2 N)$ messages, and
$O(\log^3 N)$ degree. Datar \cite{Datar06} gives a scheme based on the multibutterfly network that improves on the scheme of Fiat and Saia \cite{FS02} under the adversarial deletion model. Naor and Weider \cite{MW03} describe a simple DHT scheme that is robust under the following simple random deletion model --- each node can fail independently with probability $p$. They show that their scheme can guarantee logarithmic degree, search time, and message complexity if $p$ is sufficiently small.
In contrast, our scheme is  simpler than  \cite{MW03} and works under a more realistic stochastic deletion model
(even a large constant fraction of nodes can get deleted in our model) and guarantees the same (essentially optimal) performance bounds. Also our scheme requires no maintenance overhead under deletions unlike the scheme of \cite{MW03}.
Hildrum and Kubiatowicz \cite{HK03} describe how to modify two popular DHTs, Pastry \cite{Pastry} and Tapestry \cite{ZKJ01} to tolerate random deletions.
 Finally, we point out that several DHT schemes (e.g., \cite{SM+01,RF+01,Koorde}) have been  shown to be robust under the simple random deletion model mentioned above.
 
 Recently there has been interest in designing distributed algorithms for fundamental distributed computing tasks
 such as agreement and leader election. The work of Kapron et al. \cite{Saia} addresses this in a static setting (where the graph is fixed) under a byzantine adversary. The recent work of Augustine et al. \cite{soda12}
 is the first work that addresses the agreement problem in a dynamic P2P network under an adversarial
 churn model where the churn rates  can be very large, up to linear in the number of nodes in the network.


\section{The P2P Scheme}
\label{sec:protocol}
 We will show how to build a P2P  network  $G = (V_G,E_G)$ of expected stable size $N = |V_G|$ (defined precisely in Section \ref{sec:model}). Let $H = (V_H,E_H)$   be the static (``template")  graph
that will be used to build $G$. (We will later show how the network can
dynamically be made to adapt to a changing network size. Note that stable means
that the total network size is more or less remains the same, up to constant factors.)
 We will use the term {\em node} to
  denote a node (peer) of $G$ and the term
  {\em vertex}  to denote a vertex of $H$.

Although, in principle,  any graph can be taken as a template (or ``backbone") graph, for the purposes of
constructing efficient P2P networks   it is desirable that $H$ has certain
properties such as connectivity, regularity,  recursive structure, constant degree, logarithmic diameter,  and a simple and efficient (local) routing scheme.
Good candidates for $H$ are hypercube network  and its variants (butterfly,  Bene\v{s } network   and cube connected cycles),   de Bruijn graph etc.
Henceforth, the following assumptions will be made with respect to $H$:
\begin{enumerate}
\item $H$ has diameter $D$ and maximum degree $\Delta$.
\item $H$ has a  local and efficient routing scheme ${\cal R}$ that can route between any two nodes in $O(D)$ time using $O(D)$ messages,
 where $D$ is the diameter of $H$. Specifically it will be required that $H$  has a vertex labeling scheme that enables
 shortest path routing with low memory overhead (see e.g., \cite{PG} for a survey on such routing schemes).
 In such a routing scheme,  vertex labels are assigned in such a way that every  vertex $v$,  given the destination address $u$,  can decide locally (based solely on the address of $u$) the  outgoing edge of $v$ that (eventually) leads to $u$ by using only a routing table of size at most $O(\Delta)$ entries per node.
  (Each entry of the routing table will specify which outgoing edge to take for a given destination $u$.)  A well-known example of such a scheme is the {\em bit-fixing routing scheme} in a hypercube (and  its variants)  \cite{leighton}.
\end{enumerate}

Given the above assumptions, our scheme builds a DHT-based P2P network (with expected stable size $N$) with the following properties:
\begin{itemize}
\item The  degree of a node and its routing table size  is bounded by
$O(\Delta \log N)$ w.h.p. (cf. Theorem \ref{th:degree})
\item  At any time, the network is connected and has a diameter of $O(D)$ w.h.p. (cf. Theorem \ref{th:conn})
\item {\em Every} search will succeed in $O(D)$ time w.h.p and will use $O(D)$ messages. (cf. Theorem \ref{th:search})
\item The
time and message overheads for a node to join the network are $O(D)$ and $O(D + \Delta \log N)$ respectively w.h.p.  (cf. Theorem \ref{th:overhead})
\item Number of nodes responsible for servicing a data item is $O(1)$.  
\end{itemize}


  Throughout, we will illustrate by taking $H$ to be a {\em cube connected cycle (CCC)} network. Our scheme can be adapted to other similar types of graphs.   The  $r$-dimensional CCC is constructed from
the $r$-dimensional hypercube by replacing each vertex of the hypercube with a cycle of $r$ vertices in the CCC. The $i$th
dimension edge incident to a vertex of the hypercube is then connected to the $i$th vertex of the corresponding cycle of the CCC \cite{leighton}. In a CCC, the label of a vertex is represented by a pair $<w,i>$ where
  $i$ is the position of the vertex within its cycle and $w$ is the label of the vertex in the hypercube that corresponds
 to the cycle. Two vertices $<w,i>$ and $<w',i'>$ are linked by an edge in the CCC if and only if either
  (1) $w = w'$ and $i-i' \equiv \pm 1 \mod r $  or (2) $i = i'$ and $w$ differs from $w'$ in precisely the $i$th bit.
    Edges of the first type are called cycle edges, while edges of the second type are referred to as hypercube edges. A CCC graph of $N$ vertices has diameter $O(\log N)$ and  each vertex has degree 3.
A CCC  has an efficient routing scheme, namely the bit-fixing routing scheme \cite{leighton}  that can route in $O(\log N)$ time using
$O(\log N)$ messages using only routing tables of size $O(\log N)$.  In this scheme, to route a message between two vertices with vertex labels $<u,i>$ and $<v,j>$, the bits of $u$ are successively transformed (say, from the first to the last) to match $v$.  The message is routed between one dimension to the next using the hypercube edges, while the cycle edges are used to bring the message to the vertex of the cycle with the appropriate dimension.

   A node
 $x$  in $G$ has a label called
the {\em node-id}  which corresponds to a
 vertex label of $H$.
  We will
  choose the size of $H$ to be $S= |V_H| = \frac{N}{\alpha\log N}$, where $\alpha$ is a (suitably large) fixed constant.  (Throughout we will assume logarithm to the base 2. We will omit floors and ceilings, assuming that quantity in question is rounded to the nearest integer.)
  Node-ids are assigned randomly by sampling from all possible vertex labels of $H$.
Specifically,  if $H$ is a CCC, the  node-id of a node is obtained as follows: toss a fair coin (has a equal probability of getting
a 0 or 1) $r = \log(N/\alpha\log^2 N)$ times independently
and obtain
a $r$-bit random bit string $\sigma_r$ ($r$ is the dimension of the CCC).  Also sample  a random number from
  $1$ to $r$ and call it $i$. Then the  node-id  of the node is  $<\sigma_r, i>$.
 We say that the node {\em covers} the  vertex having the label corresponding to its node-id.
   There is an
edge between two nodes   with node-ids $x$ and $y$  if there is an edge in $H$ between $x$ and $y$ or  if $x =
y$. (Thus note that nodes that share the same vertex label will form a clique.) We call a vertex in $H$ to be {\em occupied} if there is a  node in the network
(i.e., a live peer) which covers this vertex; otherwise we call it to be a {\em hole}.

\medskip


\noindent {\bf Joining and Leaving the Network.}
 A  node   (say $v$) that wants to join the network chooses its  node-id as explained above. We assume that $N$ (the expected stable network size) or an estimate  of $N$  (a constant factor estimate is sufficient) is known to all joining nodes.  Because of the numbering
scheme, $v$ can locally determine the node-ids of its (potential)  neighbors without any global knowledge.
$v$'s neighbors in the P2P networks are the nodes that cover the above determined node-ids.
To join the network, $v$ contacts any one of the nodes in the network (such entry points are provided by an external mechanism).   $v$   can then make use of an efficient routing scheme ${\cal R}$ of $H$ to   find its neighbors (i.e., their IP addresses) and joins by connecting to them. If $H$ to be a CCC,   ${\cal R}$ can be the standard {\em bit-fixing routing scheme} mentioned earlier.

A node can  leave the network at any time; the node's data is transferred
to a randomly chosen node with the same vertex label (note that all such nodes
are neighbors of the leaving node). We show later that such a node will always exist w.h.p in our model.

\medskip

\noindent {\bf Search (Look-up) Scheme.}
Searches are handled by a DHT scheme, similar to other DHT schemes such as Chord
\cite{BK+03}.  The data (or key) is hashed to a random vertex label in the same fashion as was done for choosing the node-id of a vertex.
Data is inserted to a randomly chosen node having this label as
its node-id.  Search for
this data is thus directed to some node (say $u$) having its node-id equal to the data's
hashed value. The data will be stored in $u$ or any one
of the neighbors of $u$ that share the same vertex-label.  Since all nodes sharing a node-id are connected  to each
other (forming a clique), search will succeed even if only one node covering this vertex is live in the
network (this node will have the data). Search is performed by invoking the bit-fixing routing scheme  as illustrated below by an example.
Suppose a node with node-id $x$ wants to search for a data item hashed to a number $t$ ( $1 \leq t \leq S$).
Let the route given by the bit-fixing routing  scheme from $x$  to $t$ be $<x, u_1,  u_2, \dots, t>$.
Then $x$ will send a message to a neighbor node which
covers $u_1$ which in turn will forward  to its neighbor node which covers $u_2$ and so on.


\section{Analysis of the P2P Scheme}
\label{sec:analysis}
 We analyze various
 network parameters -- routing table size (i.e., degree), connectivity and diameter,
 maintenance overhead for joins, and  the complexity for doing search.
We  first describe the
 stochastic model used in our analysis.

\subsection{Stochastic Model}
\label{sec:model}
In evaluating the performance of our protocol we focus on the long term behavior of the
system in which nodes arrive and depart in an uncoordinated, and unpredictable fashion.
We model this setting by a stochastic  continuous-time process: the arrival of new
nodes is modeled by Poisson distribution with rate $\lambda$, and the duration of time
a node stays connected to the network is independently determined by an {\em arbitrary}
distribution $G$ with mean $1/\mu$. This is also called the $M/G/\infty$ model in
queuing theory \cite{Ross70}.   (This is  more  realistic than the less general  $M/M/\infty$
 used in
\cite{PRU01} to model P2P networks.) Measurement studies of real P2P systems \cite{SGG02, SW03,SR06}
indicate that the above model approximates real-life data reasonably well, especially
since the holding time distribution is arbitrary (in particular the study in \cite{SR06} actually indicates that
the holding times may follow Weibull or lognormal distributions).

Let $G_t$ be the network at time $t$ ($G_0$ has no vertices). We are interested in
analyzing the evolution in time of the stochastic process ${\cal G}=(G_t)_{t\geq 0}$.
Since the evolution of ${\cal G}$ depends only  ${\lambda/\mu}$ we can assume w.l.o.g.
that $\lambda=1$. To demonstrate the relation between these parameters and the network
size, we use $N={\lambda/\mu}$ throughout the analysis. We justify this notation by
showing that the number of nodes in the network rapidly converges to $N$ which we call the {\em
expected stable network size} (or simply, stable network size). We use the
notation $G_t=(V_t,E_t)$ be the network at time $t$. 

Throughout our analysis we use the Chernoff bounds for the binomial and the Poisson
distributions. Let the random variable $X$ denote the sum of $n$ independent and
identically distributed Bernoulli random variables each having a probability $p$ of
success.  Then, $X$ is binomially distributed with $\mu = np$. We have the following
Chernoff bounds~\cite{AS92}: 
For $ 0 < \delta < 1$: $\Pr(X > (1 + \delta)\mu ) \leq
e^{-\mu \delta^2/3}$ and $\Pr(X < (1 - \delta)\mu ) \leq e^{-\mu \delta^2/2}$. We have
identical bounds even when $X$ is a Poisson random variable with parameter $\mu$
\cite{AS92}.

\subsection{Network Size}

The stable network size can be computed using the fact that the joining and leaving of nodes
follow the $M/G/\infty$ queuing model, a standard model in queuing theory (see e.g., \cite{Ross70}).
The following theorem characterizes the stable network size (i.e., the network size after the system stabilizes
according to the queuing model) and is a consequence of the fact
that the number of nodes at any time $t$ is a Poisson distribution (this is true even if
the holding times follow an arbitrary distribution) \cite[pages 18-19] {Ross70};
applying the Chernoff bound for the Poisson distribution gives the high probability
result.

\begin{theorem}
[Stable Network Size] \label{th:size} Assume that the arrival of new nodes
nodes is modeled by Poisson distribution with rate $\lambda = 1$, and the duration of time
a node stays connected to the network is independently determined by an {\em arbitrary}
distribution $G'$ with mean $1/\mu$. Let $N={\lambda/\mu}$.  If $\frac{t}{N} \to \infty$, then $E[|V_t|]=N$, and
 $|V_t|=N \pm \Theta(\sqrt{N\log N})$ with probability at least $1 - 1/N^2$.
\end{theorem}

\begin{proof} Consider a node that arrived at time $x \leq t$. The probability
 that the node is still in the network at time $t$
 is $1- G'(t-x)$. Let $p(t)$ be the
 probability that a random node that arrives
 during the interval $[0,t]$ is still in the
 network at time $t$, then (since in a Poisson
 process the arrival time of a random element is
uniform in $[0,t]$),
$$p(t)={1\over t}\int_{0}^{t} (1-G'(t-x))d \tau ={1\over t}\int_{0}^{t} (1-G'(x))d \tau .$$

 Our process is similar to an infinite server
 Poisson queue. Thus, the number of nodes in the
 graph at time $t$ has a Poisson distribution with
expectation $t p(t)$ (see \cite[pages 18-19] {Ross70}).

Thus, $E[|V_t|] = t p(t)$.
When ${t/N}\to \infty$, $E[|V_t|]=N$.

 We can now use a tail bound for the Poisson
 distribution~\cite[page 239]{AS92} to show that
 \begin{eqnarray*}
 Pr \left(||V_t|-E[|V_t|]|\leq  \sqrt{bN\log N}
 \right ) \geq 1-{1/{N^2}}
 \end{eqnarray*}
 for a suitably chosen constant $b$.
\end{proof}

The above theorem assumed that the ratio
 $N=\lambda/\mu$ was fixed during the
 interval $[0,t]$. We can derive a similar result
  for the case in which the ratio changes
 to $N'=\lambda'/\mu'$ at time $\tau$.

 \begin{corollary}
 \label{cor:size}
Suppose that the ratio of  between arrival and departure rates in the network changed
at time $\tau$ from $N$ to $N'$. Suppose that there were
 $M$ nodes in the network at time $\tau$, then if
 ${{t-\tau}\over {N'}}\to \infty$  w.h.p. $G_t$
has $N'+o(N')$ nodes.
 \end{corollary}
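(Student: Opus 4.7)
The plan is to split $V_t$ into two disjoint parts: the set $A_t$ of ``old'' nodes that were present at time $\tau$ and are still alive at time $t$, and the set $B_t$ of ``new'' nodes that arrived during $(\tau, t]$ and are still present at time $t$. I would show that $|B_t| = N' + o(N')$ w.h.p.\ and $|A_t| = o(N')$ w.h.p., and then combine by the triangle inequality. The decomposition is natural because after $\tau$ the two subpopulations evolve independently — new arrivals form a fresh $M/G/\infty$ queue with parameters $\lambda', \mu'$, while each old node just runs down its residual lifetime.

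For $B_t$: restricted to arrivals after $\tau$, the process is exactly an $M/G/\infty$ queue with rate $\lambda'$ and holding-time distribution of mean $1/\mu'$, starting from the empty system at time $\tau$. The Poisson-count argument invoked for Theorem \ref{th:size} applies verbatim to the interval of length $s = t - \tau$: the count is Poisson with mean $\lambda'\!\int_0^{s}\!(1 - G'(u))\,du$, which converges to $\lambda'/\mu' = N'$ as $s/N' \to \infty$. Applying the same Chernoff bound for the Poisson distribution that was used in the proof of Theorem \ref{th:size} then yields $|B_t| = N' + o(N')$ w.h.p.

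For $A_t$: each of the $M$ initial nodes has a residual lifetime at time $\tau$ (whose law is inherited from the old holding-time distribution $G$), and I need to show that the probability of any single such node still being alive at time $t$ is $o(1)$. Any distribution with finite mean has a survival function tending to $0$, and $(t-\tau)/N' \to \infty$ places us deep in the tail; hence $E[|A_t|] = o(M)$. Combined with the fact that $M$ is polynomially bounded in $N$ (by applying Theorem \ref{th:size} at time $\tau$, or simply as an assumption), a Chernoff bound on the sum of independent survival indicators gives $|A_t| = o(N')$ w.h.p., provided $N$ is polynomial in $N'$ (the natural regime in which the corollary is used).

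The main obstacle is the second step: because the old holding-time distribution $G$ is arbitrary, I cannot extract a quantitative tail bound on the residual lifetime without leveraging the mixing-time condition $(t-\tau)/N' \to \infty$ that is built into the hypothesis. The cleanest resolution is to read this hypothesis as precisely the condition that gives the system enough time to forget the pre-$\tau$ configuration; with this in hand, the rest is routine bookkeeping of the $o(\cdot)$ terms, matching the style of the proof of Theorem \ref{th:size}.
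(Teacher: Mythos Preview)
The paper does not supply a separate proof of this corollary; it is stated immediately after Theorem~\ref{th:size} with only the remark that ``we can derive a similar result'' when the ratio changes at time $\tau$. So there is no detailed argument in the paper to compare against, and your decomposition into old nodes $A_t$ and new nodes $B_t$ is exactly the natural way to make that one-line claim precise. Your treatment of $B_t$ is just Theorem~\ref{th:size} restarted at time $\tau$, which is precisely what the paper is gesturing at.

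The only place where you go beyond what the paper actually justifies is the $A_t$ bound, and you have already put your finger on the issue. The hypothesis $(t-\tau)/N'\to\infty$ is phrased in terms of the \emph{new} mean $1/\mu'$, whereas the residual lifetimes of the $M$ old nodes are governed by the \emph{old} holding-time distribution $G$ with mean $1/\mu$. With $G$ arbitrary (only a finite mean is assumed), the residual-lifetime survival probability at lag $t-\tau$ need not decay at any rate tied to $N'$; in particular, the equilibrium residual distribution has density proportional to $1-G(x)$ and need not even have finite mean without a second-moment assumption on $G$. So the step ``$E[|A_t|]=o(M)$, hence $|A_t|=o(N')$ w.h.p.'' requires either an implicit assumption that $N$ and $N'$ are of comparable order (so that $(t-\tau)/N'\to\infty$ also forces $t-\tau$ large relative to $1/\mu$), or some mild tail control on $G$. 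Your proposal already names this as the main obstacle and resolves it by reading the hypothesis as a mixing condition; that is the honest thing to do, and it matches the level of rigor the paper itself operates at here.
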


\subsection{Network Degree}

\begin{theorem}
\label{th:degree}
 [Degree]
 At any time $t$ such that $t/N
\rightarrow \infty$, the  degree of a node and the routing table size is bounded by
$O(\Delta \log N)$ w.h.p., where $\Delta$ is the maximum degree of $H$, the template graph.  (If $H$ is a CCC, then
the degree is  $O(\log N)$.)
\end{theorem}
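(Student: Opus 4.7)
The plan is to reduce the degree bound to a balls-in-bins analysis and then apply the coverage definition from Section~\ref{sec:protocol}. Recall that a node with node-id $v$ is adjacent in $G_t$ exactly to (i) the other nodes that cover $v$ and (ii) the nodes that cover any of the (at most $\Delta$) neighbors of $v$ in $H$. Hence the degree of any node covering $v$ is bounded by $(1+\Delta)$ times $\max_{u \in V_H \cup \{v\}} L_t(u)$, where $L_t(u)$ denotes the number of currently-alive nodes covering vertex $u$ at time $t$. The entire theorem therefore follows once we show $\max_{u \in V_H} L_t(u) = O(\log N)$ w.h.p., since the $O(\Delta)$-entry routing table of $H$ lifts to an $O(\Delta \log N)$-entry routing table in $G_t$ (each entry of $H$ is replaced by the pointers to the at most $L_t(u)$ nodes covering the corresponding neighbor $u$).

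To bound $L_t(u)$, I would exploit the $M/G/\infty$ structure together with the fact that each arriving node draws its node-id independently and uniformly at random over the $S = N/(\alpha \log N)$ vertex labels of $H$. By Poisson thinning, the sub-stream of arrivals that choose any fixed vertex $u$ is itself a Poisson process of rate $\lambda/S = 1/S$, and these arrivals are served by independent holding times drawn from the same distribution $G$ with mean $1/\mu$. Thus $L_t(u)$ is itself an $M/G/\infty$ occupancy, so as $t/N \to \infty$ it converges (as in Theorem~\ref{th:size}) to a Poisson random variable with mean $\lambda/(S\mu) = N/S = \alpha \log N$.

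Now I would apply the Chernoff bound for the Poisson distribution (referenced in the Appendix): for a constant $c$ chosen large enough depending on $\alpha$,
\[
\Pr[\, L_t(u) \ge c \log N \,] \le N^{-\Omega(c)}.
\]
Since $|V_H| = S < N$, a union bound over the $S$ vertices of $H$ yields $\max_{u \in V_H} L_t(u) = O(\log N)$ w.h.p., and therefore every node in $G_t$ has degree and routing-table size $O(\Delta \log N)$ w.h.p. When $H$ is the CCC, $\Delta = 3$ and the bound specializes to $O(\log N)$.

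The only real subtlety, and what I expect to be the main (minor) obstacle, is justifying the Poisson thinning step rigorously in the $M/G/\infty$ setting rather than the $M/M/\infty$ one: one has to argue that the decomposition of the arrival Poisson process by independent uniform marking, combined with independent service times, yields independent $M/G/\infty$ sub-queues, so that $L_t(u)$ genuinely has a Poisson marginal with mean $\alpha\log N(1+o(1))$. This follows from the standard thinning/marking theorem for Poisson processes applied vertex-by-vertex, and the remaining Chernoff-plus-union-bound steps are then routine.
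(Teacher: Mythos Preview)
Your proposal is correct and follows the same overall structure as the paper: bound the number of nodes covering each vertex of $H$ by $O(\log N)$ via a concentration inequality plus a union bound over the $S$ vertices, then observe that a node's neighbors all lie in at most $\Delta+1$ vertex classes, so the degree is $O(\Delta\log N)$. The only difference is a technical one in how the per-vertex bound is obtained: the paper first invokes Theorem~\ref{th:size} to condition on $|V_t| = N \pm o(N)$, treats the number of nodes at a fixed vertex as (essentially) binomial with parameters $(N \pm o(N),\,1/S)$, and applies the binomial Chernoff bound; you instead use Poisson thinning of the arrival process to identify $L_t(u)$ directly as an $M/G/\infty$ occupancy with a Poisson marginal of mean $N/S=\alpha\log N$, and apply the Poisson Chernoff bound. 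Your route is slightly cleaner --- it avoids the conditioning step and the thinned sub-queues are genuinely independent, which makes the union bound immediate --- but the two arguments are otherwise equivalent and yield the same conclusion.
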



\begin{proof}
We first show that the number of nodes covering a given vertex is $\Theta(\log N)$ w.h.p.

 Let  $Y^v_j$ be the
indicator (0-1) random variable for the event that any given (live) node $j$ covers a given vertex $v \in H$ , i.e.,
\begin{align*}
Y^v_j = & \,\, 1 \mbox{ if node } j \mbox{ covers vertex } v \\
= & \,\, 0 \mbox { otherwise }
\end{align*}
Then,
  $$\Pr(Y^v_j = 1)  = \frac{1}{S}$$
where $S =N/(\alpha\log N)$, is the size of $H$ ($\alpha$ is a constant --- cf. Section \ref{sec:protocol}). The above probability is due to the fact that
each node chooses a vertex label with probability uniformly in the size of the template graph, i.e., $S$. 
Let $Y^v = \sum_{ j \in V_t} Y_j$ be the random variable, denoting the total
number of (live) nodes covering vertex $v$.

By Theorem \ref{th:size}, when the network is stable (i.e., $t/N \rightarrow \infty$), the number of live nodes in $G_t$ is at least $N- o(N)$  with probability at least $1 - 1/N^2$. Hence,  by linearity of expectation, the
  expected number of live nodes covering vertex $v$ in $G_t$ is at least
  $$E[Y^v] = (N - o(N)) \frac{1}{S} (1 - 1/N^2) = \alpha \log N - o(\log N) \geq 8\log N$$
  for $\alpha$ sufficiently large.

 We note that the $Y^v$ is a sum of  independent $0-1$ random variables $Y^v_j$s, and hence we can
 apply the Chernoff bound:
   $$\Pr(Y^v < \log N) = \Pr(Y^v < 8\log N (1 - 7/8)) = e^{-8\log N (7/8)^2 (1/2)} < 1/N^{2}. $$
Hence, with probability at least $1 - 1/N^{2}$, the number of nodes covering vertex $v$ is at least $\log N$, as claimed. Since there are a total of $N/(\alpha \log N)$ vertices in $H$, by union bound \cite{MU}, with probability at least $1 - 1/N$,
the above property is satisfied for all vertices.

The maximum  degree of a vertex  in the template graph is $\Delta$ and since each vertex
covered by $\Theta(\log N)$ nodes w.h.p.,  the degree of a node
in the P2P network is $O(\Delta \log N)$.  
  The bound on the routing table size follows from the fact that $H$ admits an routing scheme ${\cal R}$ that has a routing table size
of $O(\Delta)$ entries per node.
\end{proof}

\subsection{Fault-tolerance and Search}
\label{sec:fault-t}

We show that every query succeeds w.h.p at any time (after a short initial period). We
show this by first proving that every vertex is occupied w.h.p which ensures that
 queries that (logically) map to this vertex value can be serviced by some live node covering this vertex.
 This fact along with the way  edges are constructed  in the P2P network will
  show that a search  will succeed w.h.p for every search.

The following theorem shows there is no hole w.h.p. Recall that we call a vertex as a hole (see
Section 2) means that there is no node (i.e., a live peer) in the network that
covers this vertex. 

\begin{lemma} [Occupancy of Vertices]
\label{th:nohole}  At any time $t$, such that $ t/N
\rightarrow \infty$, w.h.p. every vertex of $H$ is occupied.
\end{lemma}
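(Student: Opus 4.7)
The plan is to exploit the fact that, under the $M/G/\infty$ model, the nodes covering any fixed vertex of $H$ themselves form an $M/G/\infty$ queue obtained by thinning the global arrival process, and then to union-bound over the $S = N/(\alpha \log N)$ vertices.

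First I would use the thinning property of the Poisson arrival process. Since each arriving node independently picks a vertex label of $H$ uniformly at random (by the sampling scheme described in Section~\ref{sec:protocol}), the arrivals mapped to any particular vertex $v \in V_H$ form an independent Poisson process of rate $\lambda/S = 1/S$, while the holding time distribution $G$ is unchanged. Hence for each fixed $v$ the number of live nodes covering $v$ at time $t$ is itself governed by an $M/G/\infty$ queue with traffic intensity $(1/S)/\mu = N/S = \alpha \log N$. Applying the same result from \cite[pages 18--19]{Ross70} invoked for Theorem~\ref{th:size}, the number of nodes covering $v$ at time $t$ is Poisson-distributed with mean tending to $\alpha \log N$ as $t/N \to \infty$.

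Next I would estimate the probability that a given vertex $v$ is a hole. With $X_v$ denoting the number of live nodes covering $v$,
\[
\Pr[X_v = 0] \;=\; e^{-\alpha \log N}\;(1+o(1)) \;=\; N^{-\alpha}\,(1+o(1)).
\]
Since $S \le N$, a union bound over all vertices of $H$ gives
\[
\Pr[\text{some vertex of }H\text{ is a hole}] \;\le\; S \cdot N^{-\alpha}(1+o(1)) \;\le\; N^{-(\alpha-1)}(1+o(1)).
\]
Because $\alpha > 2$, the exponent $\alpha - 1 > 1$, so this probability is polynomially small in $N$, establishing the w.h.p.\ claim.

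The only delicate point, and the one I would be most careful about, is justifying that the per-vertex occupancy counts really are (asymptotically) Poisson with the claimed mean under an arbitrary holding-time distribution $G$: this is where the $M/G/\infty$ decomposition is doing all the work, and it is also the reason the condition $t/N \to \infty$ is needed (to let the per-vertex queue approach its stationary occupancy distribution, just as in Theorem~\ref{th:size}). Once that stationary-occupancy fact is in hand, the rest is a direct Poisson tail estimate plus a union bound, and the constraint $\alpha > 2$ is precisely what makes the union bound yield a high-probability guarantee.
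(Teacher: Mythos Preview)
Your proof is correct and reaches the same bound $e^{-\alpha\log N}$ followed by the same union bound as the paper, but via a different decomposition. The paper first invokes Theorem~\ref{th:size} to condition on the total number of live nodes being at least $N-o(N)$, and then treats the vertex labels as a balls-into-bins experiment: with $N-o(N)$ balls placed independently and uniformly into $S=N/(\alpha\log N)$ bins, the probability a fixed bin is empty is $(1-1/S)^{N-o(N)}\le e^{-\alpha\log N}$. You instead thin the global Poisson arrival stream to obtain an independent $M/G/\infty$ queue at each vertex with stationary mean $N/S=\alpha\log N$, and read off $\Pr[X_v=0]=e^{-\alpha\log N}$ directly from the Poisson occupancy law. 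Your route is arguably cleaner: it avoids the conditioning on total size, makes the per-vertex independence explicit (the thinned processes are mutually independent Poisson), and gives the exact distribution of $X_v$ rather than just an empty-bin upper bound. The paper's route, on the other hand, reuses Theorem~\ref{th:size} as a black box and needs only the elementary $(1-1/S)^n$ estimate, without re-invoking the $M/G/\infty$ stationary result per vertex.
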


\begin{proof}
When $t/N \rightarrow \infty$, by Theorem \ref{th:size},  the number of nodes in the network
is at least $N - o(N)$, with probability at least $1 - 1/N^2$.  Each node has the same (uniform) probability to occupy each of the $S$ vertices of $H$, independently of the other nodes.
 Thus, at any time $t$ such that  $t/N \rightarrow \infty$, the probability that a vertex $v$ is not covered  is at most
 $$\Pr(v \mbox{ is not covered } | \mbox{ at least } N - o(N) \mbox{ nodes are in } G_t ) + \Pr (\mbox{ less than } N- o(N) \mbox{ nodes are in } G_t)$$
(In the above, we make use of the inequality $\Pr(A) \leq \Pr(A|B)  + \Pr(\bar{B})$, for any events $A$ and $B$. Specifically, here $A$ denotes the event ``vertex $v$ is not covered" and $B$ denotes the event ``at least  $N - o(N)$ nodes are in $G_t$".)
$$ = \left(1 - \frac{1}{S}\right)^{(N -
o(N))} (1 - 1/N^{2})  + 1/N^2$$ 

Thus the  probability that a vertex $v$ is not covered is
$$ \leq \left(1 - \frac{\alpha \log N}{N}\right)^{(N -
o(N))} (1 - 1/N^2) + 1/N^2$$
$$  \leq e^{-\alpha\log N} < 1/N^2 + 1/N^2 = 2/N^2$$
by our choice of $S = N/(\alpha \log N)$ ($\alpha > 2$).

 Applying the union bound \cite{MU}, the probability that
no vertex is unoccupied is at most $2/N$.
\end{proof}

The following theorem on the success probability of a search query is a consequence of the
previous theorem and the way nodes link to each other. Note that 
we assume that one time unit is  taken for sending a message across an edge (i.e., one hop).

\begin{theorem} [Search]
\label{th:search}
For any time $t$, such that $t/N
\rightarrow \infty$, w.h.p. any search  query will be successful. The
time (number of hops) needed is $O(D)$ w.h.p., where $D$ is the diameter of $H$.
\end{theorem}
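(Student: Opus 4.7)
The plan is to combine Lemma \ref{th:nohole} with the bit-fixing routing scheme on $H$: whenever every vertex of $H$ is occupied, any route in $H$ can be simulated hop-by-hop in $G$ by live nodes covering the corresponding vertex labels.

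First I would condition on the event $E$ that every vertex of $H$ is occupied at time $t$, which by Lemma \ref{th:nohole} satisfies $\Pr[E] \geq 1 - 1/N$. Consider any search query initiated at a live node $x$ for a key hashed to a vertex label $t$ of $H$. The routing scheme ${\cal R}$ produces a sequence $x = u_0, u_1, \ldots, u_\ell = t$ of vertex labels in $H$ with $\ell = O(D)$. I would then argue by induction on $i$: if the query message is currently held by a live node $y_i$ covering $u_i$, then under $E$ there exists a live node $y_{i+1}$ covering $u_{i+1}$; since $(u_i, u_{i+1}) \in E_H$, the definition of $E_G$ makes $y_i$ and $y_{i+1}$ adjacent in $G$, and the routing table of $y_i$ (of size $O(\Delta)$ entries per destination, by the assumption on ${\cal R}$) identifies such a neighbor locally from the address of $t$. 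Thus the message is forwarded in one hop.

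After $\ell = O(D)$ hops the message arrives at some live node $y_\ell$ covering $t$. By the DHT insertion rule, the queried data item is stored at some live node whose vertex label is $t$, and all such nodes form a clique in $G$; hence $y_\ell$ either holds the data itself or is one hop away from a neighbor that does, and so the search succeeds. The total failure probability is at most $\Pr[\overline{E}] \leq 1/N$, yielding the stated w.h.p. bound, and the total hop count is $\ell + 1 = O(D)$.

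The main subtlety I anticipate is ensuring that, at the moment the message arrives at $y_i$, its routing table actually points to a currently live node covering $u_{i+1}$ rather than a departed peer; this is guaranteed under event $E$ together with the join/leave protocol (which refreshes neighbor pointers when a covering peer departs), but it needs to be stated carefully so that one does not have to union-bound over individual source-target pairs — a single appeal to Lemma \ref{th:nohole} suffices because $E$ establishes occupancy simultaneously for all vertices of $H$.
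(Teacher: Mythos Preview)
Your overall approach---simulate the route $x=u_0,u_1,\dots,u_\ell=t$ of ${\cal R}$ in $G$ by hopping between live nodes covering consecutive $u_i$---is exactly what the paper does. The one substantive difference is in how the temporal aspect is handled. You define the good event $E$ as ``every vertex of $H$ is occupied at time $t$'' and condition on it once. But a search is not instantaneous: it takes $O(D)$ hops, each consuming one time unit, and during that interval nodes continue to depart. Your event $E$ does not preclude all covers of some $u_i$ leaving between time $t$ and the moment the message reaches hop $i-1$. You notice this subtlety yourself, but your proposed fix---that the leave protocol ``refreshes neighbor pointers when a covering peer departs''---is not what the scheme actually does (the paper advertises \emph{no} maintenance overhead on deletions; only the leaving node's data is handed off), and in any case pointer refresh cannot help if a vertex becomes entirely unoccupied mid-route.

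The paper's proof closes this gap differently: it invokes Lemma~\ref{th:nohole} not just at time $t$ but, via a union bound over the $O(D)$ time steps of the route, for the whole interval $[t,\,t+O(D)]$. Since Lemma~\ref{th:nohole} gives failure probability at most $1/N$ at any fixed time, the probability that some vertex is unoccupied at some hop during the route is at most $O(D)/N$, which is still $1/N^{\Omega(1)}$ for $D=O(\log N)$. With that strengthened event in hand, your inductive forwarding argument goes through verbatim. So the fix is simply to replace your single-time event $E$ by the event that all vertices are occupied at each of the $O(D)$ hop times, and to union-bound accordingly.
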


 \begin{proof}
  Consider a
search query emanating  at time $t$ from the node with node-id $x$  for a node covering
a node-id $y$ (the hash value of the data).   This search will be
 successful if there is a path in $G$  to one of the nodes covering $y$. In terms of the template graph $H$,
 consider the  path from $x$  to $y$  given by the  routing scheme ${\cal R}$ of length $O(D)$.  (If $H$ is a CCC, then
  the ${\cal R}$ is the bit-fixing scheme and $D$ is $O(\log N)$.) This path goes through a sequence
 of vertices in $H$.  From Lemma \ref{th:nohole}, it follows (via union bound)
 that  every vertex of $H$ is occupied w.h.p for a time interval $O(D)$ (starting from time $t$.)
 Thus during this time interval,  every vertex in $G$ is covered by
 some (live) node in the network. From our construction of $G$
 there is an edge between any node covering a vertex to any node covering the
 neighbor of the vertex. Thus, w.h.p the query will take $O(D)$ time.
 \end{proof}

\noindent The above theorem also implies the following result on the connectivity and
diameter of the network.

\begin{corollary} [Connectivity and Diameter]
\label{th:conn}
 For any time $t$, such that $ t/N
\rightarrow \infty$,  the network is connected and has a diameter of $O(D)$ w.h.p.
\end{corollary}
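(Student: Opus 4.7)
The plan is to obtain both conclusions as an immediate consequence of Lemma \ref{th:nohole} combined with the edge-construction rule of $G$; essentially the argument inside the proof of Theorem \ref{th:search} already does the work, and I only need to re-interpret it as a statement about arbitrary pairs of live nodes rather than about a single query path. Fix any $t$ with $t/N \to \infty$ and let $\mathcal{E}$ denote the event that every vertex of $H$ is occupied at time $t$. By Lemma \ref{th:nohole}, $\Pr(\mathcal{E}) \geq 1 - 1/N$, so it suffices to show that $\mathcal{E}$ deterministically implies that $G_t$ is connected and has diameter $O(D)$.

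Condition on $\mathcal{E}$ and pick arbitrary live nodes $x, y \in V_t$, with node-ids (i.e., vertex labels of $H$) $v_x$ and $v_y$. Running the local routing scheme $\mathcal{R}$ of $H$ between $v_x$ and $v_y$ produces a sequence $v_x = w_0, w_1, \ldots, w_k = v_y$ of length $k = O(D)$ in which consecutive $w_{i-1}, w_i$ are adjacent in $H$. Under $\mathcal{E}$ each $w_i$ is covered by at least one live node, so I may select $z_i \in V_t$ with node-id $w_i$, taking $z_0 = x$ and $z_k = y$. By the definition of edges in $G$, adjacency of $w_{i-1}$ and $w_i$ in $H$ forces $z_{i-1}$ and $z_i$ to be neighbors in $G_t$; hence $x = z_0, z_1, \ldots, z_k = y$ is a walk of length $k = O(D)$ in $G_t$. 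Since $x, y$ were arbitrary, this simultaneously establishes connectivity and bounds the diameter by $O(D)$.

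The only point that requires a second look, and what I would regard as the main (mild) obstacle, is verifying that a single high-probability event suffices to control every one of the $\binom{|V_t|}{2}$ pairs at once, with no further union bound. This is in fact free: once $\mathcal{E}$ holds, the routing path and the covering nodes $z_i$ exist deterministically for every pair, so the probabilistic content is entirely absorbed into Lemma \ref{th:nohole} (and, implicitly, Theorem \ref{th:size} for the network-size lower bound used inside that lemma). Consequently the corollary inherits the same $1 - O(1/N)$ success probability as Lemma \ref{th:nohole}, completing the proof plan.
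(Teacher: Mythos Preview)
Your proposal is correct and follows essentially the same route as the paper: the paper states the corollary as an immediate consequence of Theorem~\ref{th:search}, whose proof is precisely the argument you wrote out (occupancy of every vertex via Lemma~\ref{th:nohole}, then lift the $\mathcal{R}$-path in $H$ to a path in $G_t$ using the edge-construction rule). Your observation that a single occupancy event handles all pairs simultaneously without a further union bound is accurate and makes explicit what the paper leaves implicit.
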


\begin{theorem}[Overhead of Joining]
\label{th:overhead}
For any time $t$, such that $ t/N \rightarrow \infty$,  the
time and message overheads for a node to join the network are respectively $O(D)$ and $O(D + \Delta \log N)$ w.h.p.
\end{theorem}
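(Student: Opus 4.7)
The plan is to reduce the join cost to the cost of a constant number of routing operations over the template graph $H$, plus the cost of announcing the new node to its neighbors. Let $v$ be the joining node with chosen node-id $x$. Since the labeling of $H$ is locally decodable, $v$ can compute, without any global state, the vertex label $x$ it covers and the set of (at most) $\Delta$ vertex labels adjacent to $x$ in $H$. By Lemma~\ref{th:nohole}, for $t/N\to\infty$, every vertex of $H$ is occupied w.h.p., and by Theorem~\ref{th:degree} each occupied vertex is covered by $O(\log N)$ live nodes w.h.p. Condition on both events for the remainder of the argument; both hold with probability $1-1/N^{\Omega(1)}$, so the union bound still gives a w.h.p. statement.

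The join procedure I would analyze is as follows. First, $v$ hands its target label $x$ to its entry-point node and invokes the routing scheme $\mathcal{R}$ of $H$. Since every vertex on the computed route is occupied w.h.p., this reaches some node $u_0$ covering $x$ in $O(D)$ hops, using $O(D)$ messages and $O(D)$ time. Second, $u_0$ already stores the identities of the other nodes that cover $x$ (they form a clique) and of the nodes covering the at most $\Delta$ neighbor vertices of $x$ in $H$; indeed these are exactly the entries of $u_0$'s routing table of size $O(\Delta\log N)$ guaranteed by Theorem~\ref{th:degree}. So $u_0$ can ship the entire list of $v$'s prospective neighbors back to $v$ in $O(1)$ time and $O(\Delta\log N)$ messages. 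Third, $v$ opens connections to each of these $O(\Delta\log N)$ neighbors in parallel, which costs one additional time unit and $O(\Delta\log N)$ messages.

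Summing the three phases gives a time cost of $O(D)+O(1)+O(1)=O(D)$ and a message cost of $O(D)+O(\Delta\log N)+O(\Delta\log N)=O(D+\Delta\log N)$, as claimed. The high-probability qualifier is inherited from the two structural events we conditioned on (occupancy and degree), so the final bound holds w.h.p.

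The only delicate point I expect is justifying that the three phases can be pipelined so that the time bound is $O(D)$ rather than $O(D+\Delta\log N)$; this relies on (i) using $\mathcal{R}$ as an $O(D)$-hop route rather than a sequential $O(D)$-per-neighbor lookup and (ii) sending the $O(\Delta\log N)$ connection-announcement messages from $v$ in parallel, each of which is a single hop since $v$ already knows their IP addresses once $u_0$ has responded. Everything else — the routing-table sizes, the fact that routing along $\mathcal{R}$ never encounters an empty vertex, and the existence of a clique member at $v$'s own label — is a direct appeal to Lemma~\ref{th:nohole} and Theorem~\ref{th:degree}.
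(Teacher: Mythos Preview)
Your argument is correct and follows essentially the same decomposition as the paper: route from the entry point to a node $u_0$ covering the target label in $O(D)$ time and messages (the paper cites Theorem~\ref{th:search} directly, you unpack it via Lemma~\ref{th:nohole}), then obtain all neighbors locally from $u_0$'s routing table in $O(1)$ time, and finally account $O(\Delta\log N)$ messages for the routing-table updates/connection announcements. Your write-up is a bit more explicit about the parallelization needed to keep the time bound at $O(D)$, but the underlying proof is the same.
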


\begin{proof}
 An incoming node has to locate a node in the network
with the same node-id; then it can find all of its neighbors in $O(1)$ time and  $O(\log N)$ messages. Finding such a node (starting from some entry point node) takes $O(D)$ time (Theorem \ref{th:search}).  Hence the
total time needed to find all  neighbors is  $O(D )$. The total number of messages needed is $O(D + \Delta \log N)$  w.h.p., since $D$ messages are needed for routing (to find a node of same id) and a routing table updates of size $O(\Delta \log N)$ has to be done in total (for the new node as well as the neighbors of the new node).
\end{proof}


\section{Dynamic Maintenance of Spanning Tree of Low Diameter}
The scheme admits a simple local algorithm to dynamically maintain
a  spanning tree whose diameter is almost optimal, i.e., essentially the same
as the underlying template graph, i.e., $D$. Note that
the diameter of the P2P network is $O(D)$.  Let $G_t$ be the network at time
$t$. The goal is to compute a spanning tree of $G_t$, denoted by $T(G_t)$
of diameter $O(D)$ efficiently. 

The P2P network constructed by our scheme admits a very simple and efficient
algorithm. Let $T(G_{t})$ be the spanning tree of diameter $O(D)$ at some time $t$, such that $ t/N \rightarrow \infty$. We will first describe how $T(G_t)$ is constructed at some time $t$ and then describe how it is maintained under insertions and deletions at any time $t > t'$. With a very small probability, one may have to  construct
the spanning tree from scratch at any time $t$, as discussed below.

 Let $S(\ell)$ be the set of nodes that share the same-vertex
label $\ell$. Construct a breadth-first tree $T_H$ on the template graph $H$.  Choose a (distinguished) node $u(\ell) \in S(\ell)$ --- we call $u(\ell)$  the  leader node
of the set of nodes belonging to $S(\ell)$. The tree $T(G_t)$ is constructed as follows.
Connect the leader nodes of the respective vertex labels as they are connected
in the breadth-first tree $T_H$. Make all non-leader nodes of $S(\ell)$ the children
of the leader node $u(\ell)$. Note that non-leader nodes will all be leaves in $T(G_t)$.

The tree is maintained  as follows:

\noindent {\bf Insertion:} Let a node $v$ is inserted. Let it have vertex label $\ell$. Then the node  is added as a child of the leader node of $S(\ell)$. (Note that a leader exists
w.h.p by Lemma \ref{th:nohole}.) The time and message complexity is $O(1)$ per insertion whp.

\noindent {\bf Deletion:} Let a node $w$ be deleted. There are two cases. If $w$ is a non-leader node,
it is simply removed. Note that this does not disconnect the tree as this will be a leaf node. On the other hand, let $w$ be a leader node and let the vertex label of $w$ be $\ell$. Then
 $w$ is deleted
and in its place another (non-leader) node, say $x$, belonging to $S(\ell)$  (i.e., nodes that have the same vertex label $\ell$)  is elected as leader. By our tree construction, $x$ will be a leaf child (again
such a node will exist w.h.p by Lemma \ref{th:nohole}). Thus the rest of the tree is not affected. Also, by our construction, $x$ will
have an edge to $w$'s parent node and all its other children. Thus connectivity 
is preserved and diameter is still $O(D)$.  The  message complexity
is  $O(\log N)$ per deletion (since only so many nodes are affected) and the time complexity is $O(1)$ (all bounds hold whp). Note that
leader election itself can be done in $O(1)$ rounds and $O(\log N)$ message complexity as the set $S(\ell)$ forms a clique.

There is a small probability that the above algorithm will fail, e.g.,
deleting a node, leaves the corresponding vertex unoccupied (i.e., a hole).
In such a case, one has to reconstruct the tree from scratch as discussed first.

Hence we can state the following theorem.
\begin{theorem}
For any time $t$, such that $ t/N \rightarrow \infty$,
a spanning tree of diameter $O(D)$ (where $D$
is the diameter of the underlying graph) can be maintained using $O(\log N)$  messages  w.h.p. and $O(1)$ time w.h.p. per insertion or deletion.
\end{theorem}

\section{Handling Change in Stable Network Size}
\label{sec:changing}
The performance of our scheme depends on the stability of the network. It is easy to see that our scheme can easily tolerate changes up to constant factors (thus, as mentioned earlier, it is enough to have an estimate of $N$ up to some constant factor). However,
bad events,  such  as the network size
  drastically getting reduced, possibly even leading to the  network getting disconnected,  can happen, but
 with minuscule probability in our model. In case such events happen (which will eventually happen with
 probability 1 if the system runs forever)
remedial measures  can be taken such as resorting to an external mechanism to connect
the network again (if the network gets disconnected) or
 rejecting new connections (if the size exceeds very much)  till the situation self-corrects
 itself. Our analysis can be extended to handle such situations.

We now discuss how the scheme can be modified to accommodate gradual changes in stable network size.
As shown in Corollary \ref{cor:size}, if the ratio between the   arrival and departure rates in the network change, then this leads to a new expected stable network size.  Suppose the new stable size  is  one-half  of the original network size.
How can the network adapt to this changed size? Assume that $H$ is a hypercube (similar argument will work for CCC and other related networks).  All that is required is to  reduce  $S$ (the size of $H$) by a factor of 2. This can be done easily in a local manner. Each node will simply reduce its dimension  by 1. This can be accomplished by
dropping the last bit in the node-id.  The hash values of data are also altered in the same way. It is easy to see that because of the recursive nature of construction of the hypercube (i.e., a hypercube of dimension $r$ can be constructed from two hypercubes of dimension $r-1$), reducing the dimension will require only $O(1)$ overhead per node. To illustrate, consider two nodes with node-ids $<x_1,\dots, x_r,0>$ and
$<x_1,\dots, x_r,1>$. Dropping the last bit, will make {\em both}  these nodes to cover the vertex with label $<x_i, \dots, x_r>$. Data that were originally serviced by either of these will now be serviced by both of them.
On the other hand, if the stable network size increases by a factor of two, then each node will increase its dimension by one, by adding one more random bit to its node-id (cf. Section \ref{sec:protocol}).   To illustrate, consider the set of nodes with the same node-id $<x_1,\dots, x_r>$.  Randomly adding one more bit (last bit), will make on the average half of the nodes
in this set to cover the vertex with label $<x_1, \dots, x_r,0>$ and the other half to cover $<x_1,\dots,x_r,1>$.
The data that are serviced by these nodes also get hashed to the same node-ids.
It is not difficult to show that the above transformation preserves all the properties of the scheme, namely network degree, number of hops needed for search, fault-tolerance, connectivity, and diameter.

\section{Simulation Results}
Our P2P scheme (\ref{sec:protocol}) guarantees certain properties to hold within the network, such as logarithmic diameter, logarithmic searching, etc., with high probability.  The theoretical results proved earlier are asymptotic, i.e., shows that the above properties hold when $\frac{t}{N} \to \infty$. We  present a simulation of the scheme   to get a better picture of how the network will react in practice. Thus it is also of interest to see performance data measured from simulations. 

\subsection{The Simulator}
The simulator is written in Java  to mimic a network that runs for some time $t$ with stable network size $N$. The simulation of all nodes is done in serial.  
The network loops for $t$ cycles, each adding a sequence of new peers, then removing peers who have stayed for their predetermined length. Every so many cycles, the network is inspected to determine varying statistics, e.g. diameter, average degree. 

The network of peers(or nodes) is built upon a "backbone" graph. 
Our scheme allows for a wide variety of "backbone" graphs; the simulator presented here uses a Cube Connected Cycle (CCC) graph. The network is set up with an expected stable network size, $N$, and a runtime length, $t$. 
The dimension $r$ of the CCC graph to build the network on is then determined by 
$r = \log(N/\log^2 N)$ (cf. Section \ref{sec:protocol}). 
The simulator is then looped for $t$ cycles, each cycle composed of: removal of nodes whose time has expired, addition of new nodes, and calculation of network statistics.
The removal of nodes is simple; each is assigned a session length when they arrive based on the probability distribution which governs how long the node stays in the network. This length is decreased every cycle until it reaches zero, when it is removed according to our scheme.

Nodes arrive based on a Poisson distribution with rate $\lambda$. 
This is achieved by each cycle sampling a Poisson random variable with rate $\lambda$ and adding that many nodes to the graph, serially. 
Each node's session length, $1/\mu$, 
is sampled then from an arbitrary distribution, with mean $1/\mu$. 
Based on real world statistics in
Stutzbach, et al. \cite{sampling, SR06}, Weibull, log-normal, and exponential distributions fit well to mirror actual peer session lengths.
For most simulations, the session length was taken from random variable with Weibull distribution, with shape parameter $k = .59$ (based on \cite{sampling}) 
and varying the scale parameter $\lambda$ such that the mean of the random variable will be $1/\mu$.
The Poisson variable rate $\lambda$ and Weibull mean $1/\mu$ are then chosen so
$\lambda/\mu = N$. The node is then added to the network as described in Section \ref{sec:protocol}.

The simulator keeps track of the basic network statistics: diameter, average degree, as well as
those of interest to this specific network construction: vertex coverage, i.e. the percentage of 
vertices in the "backbone" network that are covered by network nodes; average coverage, i.e. 
the average number of nodes covering a vertex in the "backbone"; and random path length, i.e.
the average path length through the network over $\log(N)$ paths.

\subsection{Results}

We now present our simulation results on various network parameters, namely coverage, connectivity,
diameter, and how the network reacts to changes in network size.

\begin{figure}[ht]
\begin{minipage}[b]{3in}
\centering
\includegraphics[width=3in]{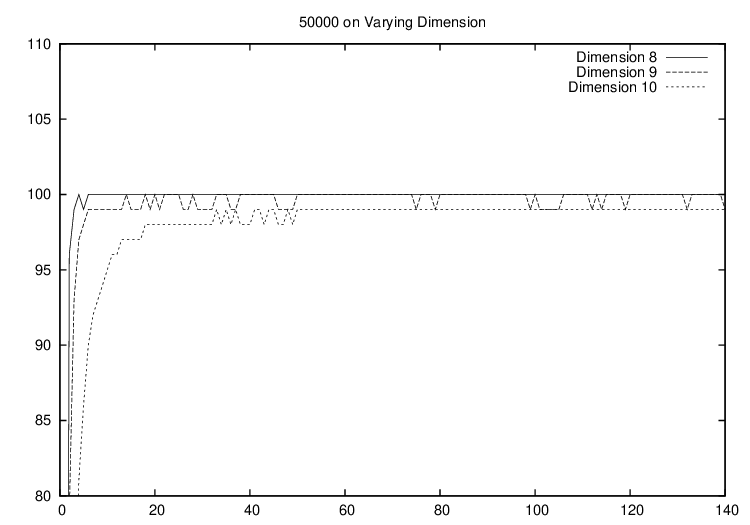}
\caption{Coverage}
\label{fig:figure1}
\end{minipage}
\hspace{0.5in}
\begin{minipage}[b]{3in}
\centering
\includegraphics[width=3in]{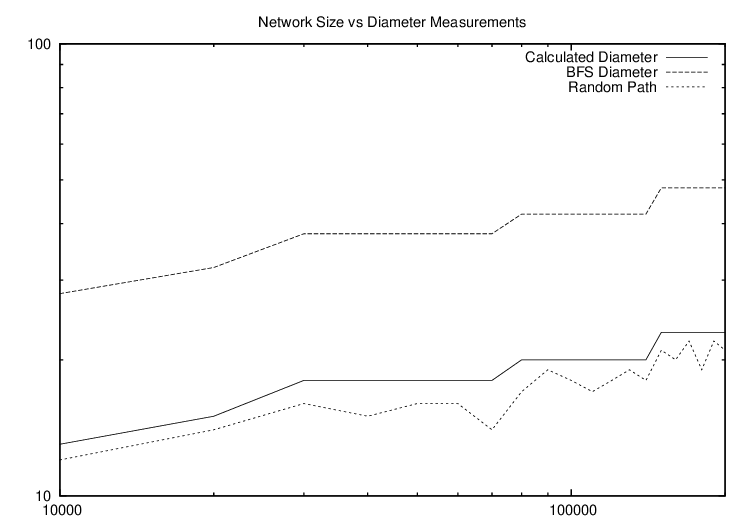}
\caption{Diameter}
\label{fig:figure2}
\end{minipage}
\end{figure}
\begin{figure}[ht]
\begin{minipage}[b]{3in}
\centering
\includegraphics[width=3in]{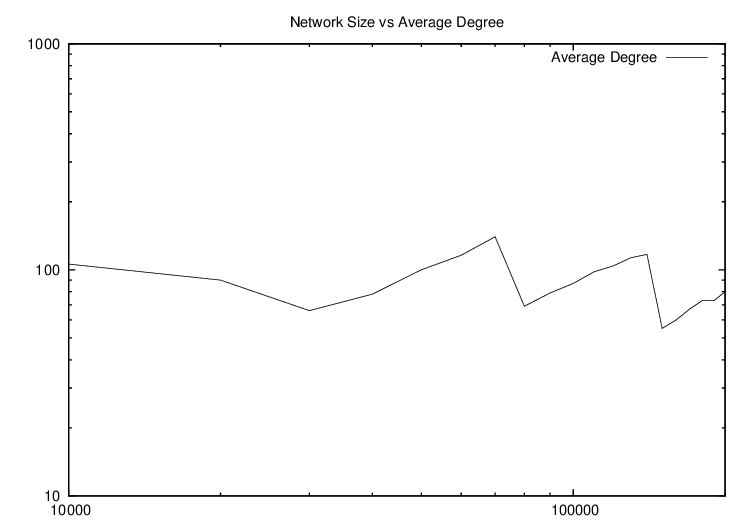}
\caption{Average Degree}
\label{fig:figure3}
\end{minipage}
\hspace{0.5in}\begin{minipage}[b]{3in}
\centering
\includegraphics[width=3in]{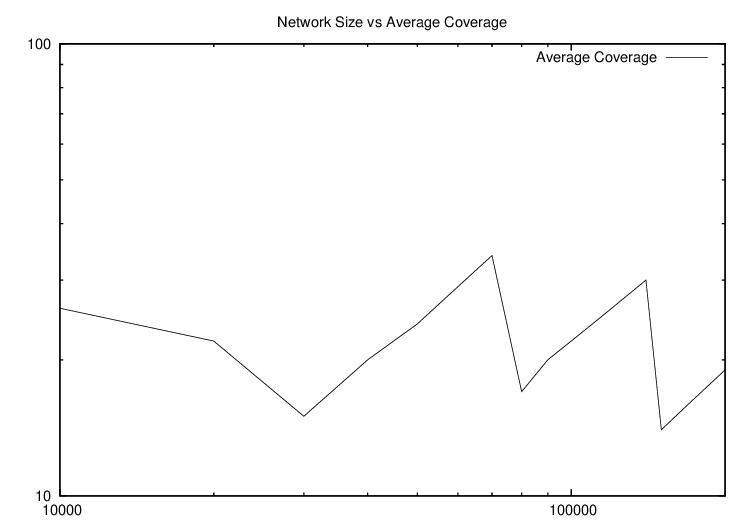}
\caption{Average Coverage}
\label{fig:figure4}
\end{minipage}
\end{figure}

\begin{figure}[h]
\begin{minipage}[b]{3in}
\centering
\includegraphics[width=3in]{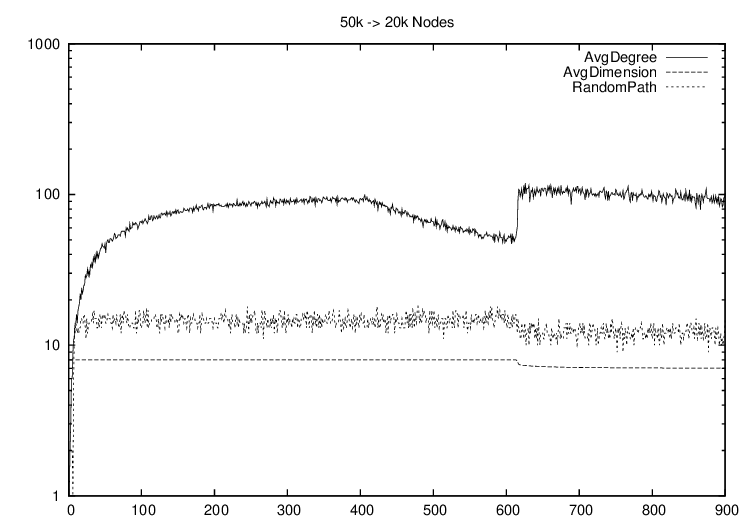}
\caption{Dimension Adjustment}
\label{fig:figure5}
\end{minipage}
\end{figure}

\subsubsection{Coverage}
The coverage (i.e., occupancy of the underlying template graph) of the network is important, without 100\% coverage routing through the network 
cannot be assured to be done efficiently, and if the coverage becomes too low, the network may become
disconnected. Coverage is measured as the percentage of vertices in the template graph (i.e., CCC) that are covered by
a node in the network. Coverage is tied closely to the dimension of the 
CCC graph, in relation to the number of nodes in the network. If the dimension of the CCC graph is too
large for number of nodes, the network can never reach 100\% coverage, as seen in the dimension 10 network
in Figure 1 (all figures are placed in the Appendix). However, it can also be seen that if the dimension fits the number of nodes, the graph
will reach 100\% coverage quickly.
The dimension $r = \lceil\log(N/\log^2 N)\rceil$, with $N$ being stable network size,
 gives 100\% coverage once the network reaches stable size with every simulation.

\subsubsection{Diameter}
\label{stats:diameter}

The diameter $d$ of a CCC graph of dimension $n$ can be computed by $d = 2n + \lfloor n/2\rfloor - 2$ 
for $n\geq4$ \cite{cccdiameter}. The diameter of the network was computed approximately by traversing the 
network with breadth-first search, and taking the diameter to be twice the height of the produced tree.
This provides a reasonable estimate, within a constant factor. With networks of a large number of nodes, 
this becomes to inefficient in terms of time, in many simulations, tripling the run time. A faster approach
is to consider the random path. In the random path, two nodes would be pulled from the network at random, and
a path would be routed between them. $\log(N)$ paths are sampled, and their lengths averaged together. This 
allows a much faster measurement of the network diameter. As seen in Figure 2, the random path actually
provides a much more accurate diameter than the breadth-first search; due to the efficient shape of the CCC graph,
the BFS-diameter is greater by almost a factor of 2.

\subsubsection{Average Degree and Coverage}
\label{stats:avgdegree}

The average degree of a node in the network can be seen in Figure 3 to grow with the network size, keeping
within a constant factor of $\log(N)$. The sharp drops in the average degree occur when the network size is 
large enough to support a higher dimension CCC graph, spreading the nodes in the network over many more 
vertices in the backbone CCC graph. 

The average number of nodes covering a vertex is heavily related to the average degree of a node. As seen
in Figure 4, it follows the same pattern of growth. It is interesting to point out, from networks ranging
of size 10000 to 150000, the average degree and coverage stay constant, around average degree of 100 and
average coverage of 25. 

\subsubsection{Handling Changes in Network Size}
\label{changing}

The network will not stay at constant size forever and must compensate for drastic changes in network size.
To accomplish this, the dimension of the network must be increased or decreased to adjust for an increase or 
decrease in overall network size. This should be accomplished in as decentralized process as possible, so that each
node must work to keep track of the network stability.

We use the following method to detect changes in network size.
Each time unit in the simulator, a node in the network runs a simulation method mimicking normal operations of a
node. At regular intervals, the will look into its neighbors in the network and attempt to ascertain if the current
network is stable, then take measures if it is not stable. The regular intervals were tested with success at 100
to 500 time units; any shorter and the fluctuation of network would interfere too greatly for one individual node
to correctly calculate the network status.

A node determines whether it is stable by using the average degree of several nodes, and 
checking if it is close to the ideal stable degree, $D$. 
The degree of the nodes stays within $O(\log N )$, and based on previous network simulations of nodes in the networks up to 1000000 nodes, the stable degree will fall around 100, $\pm50$, as seen in Figure 3. Each node tracks the progression of 
the average degree of a sampling of nodes in the network, called $A$.
If $A$, begins moving away from the stable degree size, the node will then 
lower its dimension if $A$ is falling or increase its dimension if $A$ is rising. There is a buffer of $\pm65$
around $D$, so that random variations in the sample average degree will not trigger incorrect dimension change.

Each dimension change will only decrease the dimension of the node by 1. This is to prevent the network from 
growing or shrinking too quickly. If the nodes of the network dimensions would make large increases in dimension, 
the nodes would need to expand to too large a CCC, increasing the time the network is disconnected. Decreasing the 
dimension greatly would cause the cycles of the CCC to be shortened too much, causing difficulties in network 
routing.

If each node was left to change by themselves, many nodes would not get a chance to change or change too slowly and 
leave the network unstable or disconnected. To remedy this, once a node detects a network instability and changes dimension, it sends a message to each of its neighbors, suggesting for them to change their dimension to its new 
dimension. Each node monitors these messages, and once it receives enough of them (simulations have shown that 
around 5 is sufficient to eliminate any false positives), it will change its dimension to the suggested dimension, 
regardless of their own measure of the average degree. As each changes, it sends its own suggestion messages, which 
will then effectively propagate the change in dimension across the network. 

As the network is undergoing change, nodes are still joining it, so their dimension is decided by rounding the 
average dimension of all nodes that share its vertex and node id. Since all nodes that share a vertex have the same 
degree, once change to the dimension comes to a vertex, they will all change very quickly, so the new node will either have the new correct dimension or be in a vertex that the change has not propagated to yet.

Figure 5 depicts a typical network response to a large change in network size. The network was simulated for 450000 
time units, where a 50000 node network dropped to a 20000 node network. The network is forgiving in small decreases, 
but once the network drops too far at $t = 610$, the network corrects itself quickly, 70\% of the nodes in the 
network switching in less than 7000 time units. Due to nodes continually being added while the network is adjusting, 
perfect instantaneous convergance to the new dimension is unlikely, but as the network progresses, it will continue  
to self-adjust and reduce the average dimension in all of its nodes to the new correct dimension. The random path 
statistic in Figure 5 shows the average of a sampling of nodes route lengths when trying to reach a random 
assortment of nodes, mimicking requests during normal network operations. While the network contains nodes of 
differing degrees, it is still able to function normally, with few disconnects or routing problems.


\section{Concluding Remarks}
\label{sec:conc}
\vspace{-0.1in}
We presented a simple and general scheme for building a structured P2P network. We analyze our scheme under a realistic churn model and provably show that
it gives essentially optimal bounds with respect to search time, degree,  message complexity and maintenance overhead. The scheme offers  algorithmic benefits to efficient distributed dynamic maintenance of spanning trees.   It will be interesting to explore dynamic algorithms for other problems in this scheme. We also did a simulation based-study the understand the average performance of the scheme
in  networks of moderate size.  For future work, it will be worthwhile to deploy  a P2P system  to evaluate the performance of the proposed scheme.

\bibliographystyle{plain}

\end{document}